\documentclass[11pt]{article}

\usepackage[alg]{KLMM/klmm}
\usepackage{xcolor}

\newcommand{\Z}{\mathbb Z}

\def\R{{\mathbb{R}}}

\def\Z{{\mathbb{Z}}}

\def\R{{\mathcal{R}}}
\def\K{{\mathcal{K}}}

\def\F{{\mathbb{F}}}

\def\Pr{{\mathbb{P}}}

\newcommand{\NP}{\mathsf{NP}}
\newcommand{\BPP}{\mathsf{BPP}}
\newcommand{\DIV}{\mathsf{DIV}}
\newcommand{\NONDIV}{\mathsf{NONDIV}}
\newcommand{\SparseRoot}{\mathsf{SparseRoot}}

\DeclareMathOperator{\cont}{cont}
\DeclareMathOperator{\lc}{lc}

\title{Sparse Polynomial Divisibility Test over Finite Field is CoNP-hard}
\author{\fontsize{11}{16}\selectfont
    Yichuan Cao\institution{State Key Laboratory of Mathematical Sciences, Academy of Mathematics and Systems Science, Chinese Academy of Sciences; University of Chinese Academy of Sciences}\equalcontrib
    \and Ruichen Qiu\instref{1}\equalcontrib
    \and Qiao-Long Huang\institution{Shandong University}
    \and Ruyong Feng\instref{1}
    \and Xiao-Shan Gao\instref{1}\Cauthor
}
\date{}

\begin{document}
\maketitle

\begin{abstract}
\noindent
In this paper, we show that deciding whether a sparse polynomial does not divide another sparse polynomial exactly over finite fields is NP-hard under BPP many-one reductions. Equivalently, the sparse polynomial divisibility test over finite fields is CoNP-hard. This resolves the long-standing open problem concerning the computational complexity of the divisibility test for sparse polynomials in the setting of finite fields.
\end{abstract}

\section{Introduction}

Polynomial arithmetic is a fundamental algebraic operation that appears in many scientific and engineering computations and is a basic issue in computer algebra and the theory of computation~\cite{ModernCA,Jin2025STOC,NickFischer2025SODA}.
%
%
For two dense degree-$D$ univariate polynomials over a field, the fast Euclidean algorithm uses $O(D\,\log^2 D\,\log\log D)$ field operations~\cite[Chapter 11]{ModernCA}. 
Thus, the divisibility test for two dense polynomials can be done with the same complexity.

However, polynomials are typically expressed in sparse form~\cite{SparsityChallenges}, meaning they are given as a list of their nonzero coefficients together with the associated exponents.
Sparse polynomials are also the standard form for polynomials in computer algebra systems such as Maple and Mathematica.
%
%
For a sparse univariate polynomial $f$ of degree $D$, containing $T$ nonzero terms, and whose coefficients have a maximum absolute value of $H$, its size is tightly bounded by
$$\text{size}(f) = O(T(\log D+\log H))=O(T\log(DH)).$$
So a sparse polynomial-time algorithm should have bit complexity that is polynomial in $T$, $\log D$, and $\log H$.

Arithmetic for sparse polynomials is quite straightforward. However, some seemingly simple problems are shown to be NP-hard.
Deciding whether $x^N - 1$ cannot divide the product of a finite set of sparse polynomials over the integers is NP-complete~\cite{plaisted1984}.
Assuming the Extended Riemann Hypothesis (ERH), the divisibility test problem belongs to coNP~\cite{Grigoriev-ISSAC1992}.
Deciding whether a sparse polynomial is divisible by the square of a degree one polynomial in $\F_p[x]$ is NP-hard \cite{BI-SIAMJC2016}.

The intrinsic difficulty of the divisibility test is that using the Euclidean algorithm to check divisibility is not polynomial-time.
When performing long division on univariate sparse polynomials, the size of the quotient can range from constant to exponential in the input size, both in terms of the number of nonzero monomials and the sizes of the coefficients, as illustrated by the following example from~\cite{Grenet-H2026}.

\begin{example} 
Let $f = x^{2d+1} - x^{2d}$ and $g = x^{d+1} - x^d + 1$ in $\mathbb{Z}[x]$. Then, the quotient is $x^d + \sum_{i=0}^{d-1} 2^{d-i-1}x^i$ and the remainder is $(2^d - 3)x^d - \sum_{i=0}^{d-1} 2^{d-i-1}x^i$ when performing the long division of $f$ with respect to $g$.
\end{example}
Even if $g$ divides $f$, this difficulty still exists, as illustrated by the following simple example.
\begin{example}
Let $f=x^D-1$ and $g=x-1$. Then $f/g=\sum_{i=0}^{D-1} x^i$, whose term is exponential in the size of $f,g$. 
\end{example}

The divisibility test for sparse polynomials is a natural and long-standing open problem, explicitly proposed in \cite{Karpinski1999CH},
\cite[Challenge 3]{SparsityChallenges}, 
and \cite[Open Problem 3]{Roche2018}. 
More recently, \cite[Open Problem 9.4 ]{Grenet-H2026} states the problem more precisely as:
%
%
%

\begin{quote}
\textbf{Open Problem 1.} 
Given two sparse polynomials $f, g \in \K[x]$ over a field $\K$, what is the complexity of testing whether $g$ divides $f$? Is it possible to perform the computation in polynomial-time in the sparse representation of $f$ and $g$? Is the problem NP-hard?
\end{quote}

In this paper, we consider divisibility test over finite fields, and we restate Open Problem 1 in a more explicit form below.

\begin{quote}
    \textbf{Open Problem 2.} 
    Given two sparse polynomials $f,g \in \F_q[x]$, is it possible to 
    determine whether $g$ divides $f$ exactly in polynomial-time in the sparse representation of $f$ and $g$?
\end{quote}


\subsection{Main Results}
In this paper, we settle Open Problem 2 by showing that 
the sparse divisibility test problem over finite fields is CoNP-hard. 
\begin{theorem}
\label{thm-main}
Given two sparse polynomials $f,g \in \F_q[x]$, deciding whether $g$ does not divide $f$ exactly is NP-hard under BPP many-one reductions. 
Consequently, under the standard complexity assumption \(\mathrm{NP}\nsubseteq\mathrm{BPP}\), no polynomial-time algorithms exist for the non-divisibility test for sparse polynomials with finite field coefficients.
\end{theorem}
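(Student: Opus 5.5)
We reduce from a known NP-hard problem about sparse polynomials over finite fields. The natural candidate is the result of Bi, Cheng and Rojas~\cite{BI-SIAMJC2016}: deciding whether a sparse polynomial $h\in\F_p[x]$ has a root in $\F_p$ (or in a subgroup of $\F_p^*$) is NP-hard under BPP reductions; we denote this problem $\SparseRoot$. A sparse $h$ has a root in $\F_p^*$ exactly when $\gcd(h, x^{p-1}-1)\neq 1$. This is not a divisibility statement, since $x^{p-1}-1$ has exponentially many factors. The plan is therefore to convert ``$h$ has a root in a multiplicative subgroup $\mu_N\subseteq\F_q^*$'' into ``some sparse $g$ fails to divide some sparse $f$''. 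This matches the $\NONDIV$ direction of Theorem~\ref{thm-main}: a root is a witness of non-divisibility.

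\textbf{Step 1: the source problem.} We fix an NP-hard source of the form: given sparse $h\in\F_p[x]$ and a divisor $N\mid p-1$ (with $p$ and $N$ produced by the BPP reduction, for instance from a subset-sum or 3-SAT encoding in the style of Plaisted~\cite{plaisted1984}), decide whether $h(\zeta)=0$ for some $\zeta\in\mu_N$. Plaisted's construction encodes a 3-CNF formula through products of sparse cyclotomic-type polynomials, with roots of unity of order $N=\prod p_i$ playing the role of assignments. Over $\F_p$, the randomness is used to choose a prime $p\equiv 1 \pmod N$; Dirichlet density lets us find one in BPP. This gives $\mu_N\subseteq\F_p^*$.

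\textbf{Step 2: encoding ``no root on $\mu_N$'' as divisibility.} The aim is a single sparse divisibility statement. Write $\Phi=x^N-1=\prod_{\zeta\in\mu_N}(x-\zeta)$. Then $h$ has no root on $\mu_N$ iff $\gcd(h,\Phi)=1$, iff $h$ is invertible modulo $\Phi$. Invertibility is not sparse, so we use the multiplicative structure instead. For each $\zeta\in\mu_N$, $h(\zeta)\in\F_p$, and $h(\zeta)\neq 0$ iff $h(\zeta)^{p-1}=1$. Hence $h$ has no root on $\mu_N$ iff $x^N-1 \mid h(x)^{p-1}-1$. The polynomial $h^{p-1}$ is not sparse, so we replace it by a Frobenius-friendly expression. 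Take $h^{p-1}=h^{p}/h$, where $h(x)^p=h^{\sigma}(x^p)$ is sparse and $h^\sigma=h$ over $\F_p$. Reducing $x^p$ modulo $x^N-1$ gives $x^{p}\equiv x^{p\bmod N}$, and since $p\equiv1\pmod N$, $h(x)^p\equiv h(x)$ mod $x^N-1$. That collapses to a triviality, so instead we work in an extension $\F_q$, $q=p^k$, with $N\mid q-1$ but $p\not\equiv 1\pmod N$, and use the norm/Frobenius orbit. The condition becomes $x^N-1 \mid h(x)\cdot(\text{something sparse}) - c$. Concretely, the first attempt is to find sparse $f,g$ with $g = x^N-1$ or a sparse multiple of it, such that $g\mid f$ iff every $\zeta\in\mu_N$ satisfies a polynomial identity equivalent to $h(\zeta)\ne 0$ (or, in the reversed convention, iff $h(\zeta)=0$ for all $\zeta$ in a suitable subset). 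The cleaner variant is the reversed convention. We choose the NP-hard source as ``does there exist $\zeta\in\mu_N$ with $h(\zeta)\neq 0$'', i.e.\ $h\not\equiv 0 \bmod x^N-1$ after a sparse substitution. Reducing a sparse $h$ modulo $x^N-1$ is trivial, though, so the hardness must come from a composed structure such as $h(x^a)$ modulo $x^N-1$ with $\gcd$ conditions. We therefore aim for $f=\prod$ of a few sparse polynomials, expanded via Frobenius as sparse, and $g$ sparse.

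\textbf{Main obstacle.} The crux is Step 2: producing a genuinely sparse $f$ and $g$ whose divisibility captures the existential root condition. Plaisted-type hardness uses products of many sparse polynomials, and expanding such a product destroys sparsity. The first thing we would try is to choose factors whose product becomes sparse in characteristic $p$, namely $p$-th power Frobenius twists $\prod_i h(x^{p^i})=\prod_i h(x)^{p^i}$. We would then take $g=\Phi_N$-type sparse factors. Step 3 is to verify, with bit-size bounds, that the construction is polynomial in $\log p$, $\log N$ and the number of terms. Step 4 is to conclude: an assumed polynomial-time non-divisibility test would put $\SparseRoot$ in BPP, giving NP $\subseteq$ BPP.
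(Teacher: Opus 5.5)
Your proposal has a genuine gap. Step~2, which you yourself call the crux, never produces a sparse pair $(f,g)$, and you drop the one line of attack that works exactly one step before it succeeds. You correctly reach ``$h$ has no root on $\mu_N$ iff $x^N-1\mid h^{p-1}-1$'', and you correctly note that $h(x)^p=h(x^p)$ is sparse. The missing idea is to multiply the divisibility relation through by $h$, instead of trying to make $h^{p-1}-1$ sparse. For $h\neq 0$, cancellation in the domain $\F_p[x]$ gives
\[
  x^N-1\mid h^{p-1}-1
  \iff
  (x^N-1)\,h \mid h^p-h = h(x^p)-h(x).
\]
Both $g=(x^N-1)h$ and $f=h(x^p)-h(x)$ have at most $2t$ terms, with exponents of polynomial bit size. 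This is precisely the paper's bridge (Lemma~\ref{lem:bridge}). The paper uses $x^p-x$ in place of $x^N-1$, so that roots anywhere in $\F_p$ are detected, and adds a guarded instance $(f,g)=(1,x)$ for $h=0$.

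Your remark that reducing $x^p$ modulo $x^N-1$ ``collapses to a triviality'' conflates two statements. The statement $x^N-1\mid h^p-h$ is always true; the statement $x^N-1\mid h^{p-1}-1$ is the nontrivial one. What is lost in passing from the second to the first is exactly the factor $h$, which has to be kept in the divisor. Your later detour through extension fields, norms, Frobenius orbits and products of twists never becomes a concrete construction. As written, there is no reduction whose correctness could be checked.

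Step~1 is also more elaborate and less justified than necessary. You do not need a Plaisted-style encoding with roots of unity of a large smooth order $N$. Your claim that Dirichlet density lets you find a prime $p\equiv 1\pmod N$ in BPP, when $N$ is exponentially large, requires uniform results on primes in progressions that you do not supply. Bi--Cheng--Rojas~\cite[Theorem~1.4]{BI-SIAMJC2016} already give NP-hardness, under BPP many-one reductions, of deciding whether a sparse $P\in\F_p[x]$ has a root in $\F_p$. That is all the bridge needs.

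Once the bridge is in place, your Steps~3 and~4 go through as you describe. The composed map $y\mapsto(p_y,f_{P_y},g_{P_y})$ is a BPP many-one reduction to $\NONDIV$, with the polarity you anticipated: a root witnesses non-divisibility. A polynomial-time non-divisibility test would then give $\NP\subseteq\BPP$.
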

%
The proof constructs a randomized polynomial reduction from an NP problem to the non-divisibility test problem using a new NP-hard problem from~\cite{BI-SIAMJC2016}.
Observe that Theorem \ref{thm-F} establishes the result in the case of $q$ being a prime number, and this is sufficient to deduce Theorem \ref{thm-main}.

Observe that Theorem \ref{thm-main} applies to multivariate polynomials over finite fields, with the univariate situation arising as a particular special case. Further, the statement  remains valid for any field $\K$ of positive characteristic $p$, because $\F_p$ is a subfield of $\K$.

In the finite field setting, Theorem \ref{thm-main} shows that there exists no randomized polynomial-time algorithm for the non-divisibility test under standard complexity assumptions.
This raises several new questions: Is non-divisibility NP-complete under BPP reductions? Is non-divisibility NP-complete in the usual sense? Or is non-divisibility actually PSPACE-complete?
In the rational numbers setting, there exists no essential progress for Open Problem 1. 


The main results presented in this paper are obtained through an interactive collaboration between the authors and an artificial intelligence agent system, \textit{MechMath Agent Team (MMAT)}~\cite{MMAT}. 
The authors assume full responsibility for the paper’s content.
MMAT is a large language model driven agent designed to prove mathematical theorems expressed in both natural language and formal language in Lean. 

\subsection{Lean Formalization}
The algebraic core of our finite-field hardness proof, the Divisibility Bridge of Lemma~\ref{lem:bridge}, has been formally verified in Lean~4 with version \texttt{4.29.0}~\footnote{\url{https://github.com/EonMath/sparse-div}.}.  

Modeling \(\F_p\) as \texttt{ZMod\,p} with \(p\) prime and
``\(P\) has a root in \(\F_p\)'' as \(\exists a\in\F_p,\ P(a)=0\), the
formalization defines the map \(P\mapsto(f_P,g_P)\), including the guarded case
\(P=0\), and proves all of its asserted properties: that \(g_P\ne0\); the
equivalence \(g_P\mid f_P\Leftrightarrow P\) has no root in \(\F_p\), together
with its contrapositive root form
\(g_P\nmid f_P\Leftrightarrow P\) has a root in \(\F_p\); and the degree bounds
\(\deg f_P\le pE\) and \(\deg g_P\le p+E\).  The two key steps of the proof are
isolated as reusable lemmas, namely the characteristic-\(p\) Frobenius identity
\(P(x^p)=P(x)^p\) and the divisibility criterion
\((x^p-x)\mid R\Leftrightarrow R\) vanishes on all of \(\F_p\).  The development
contains no \texttt{sorry} and depends only on \texttt{mathlib}'s standard
axioms.  The remaining ``at most \(2t\) terms'' and sparse bit-size blowup
clauses of Lemma~\ref{lem:bridge} are informal statements about the sparse
encoding and are not part of the formal development.

We do \emph{not} formalize the hardness theorem itself
(Theorem~\ref{thm-F}).  Its statement quantifies over the complexity classes
\(\NP\) and \(\BPP\) and asserts NP-hardness \emph{under BPP many-one
reductions}; verifying it formally would first require a substantial Lean theory
of computational complexity---a machine model, polynomial-time randomized
reductions, and the classes \(\NP\) and \(\BPP\)---none of which is currently
available in \texttt{mathlib}.  What the reduction actually uses beyond this
complexity-theoretic scaffolding is precisely the algebraic equivalence of
Lemma~\ref{lem:bridge}: given the NP-hardness of \(\SparseRoot\)
(Theorem~\ref{thm:bcr}), the bridge turns a root-existence query into a single
non-divisibility query, while the surrounding wrapper is a standard many-one
reduction.  We therefore formalize that algebraic core and leave the
complexity-class bookkeeping at the paper level.

\subsection{Related Work}

For dense degree-$D$ polynomials $f,g$ over a field. The fast Euclidean algorithm computes the quotient and remainder $h,r$ such that $f = hg + r$ using $O(D\,\log^2 D\,\log\log D)$ field operations~\cite[Chapter 11]{ModernCA}. Consequently, a divisibility test can be performed within the same complexity bound.

For sparse polynomials, a polynomial-time algorithm is provided for the divisibility test in the case where $g$ has a specific special form~\cite{Giorgi2021-ISSAC}. 
There are numerous known results for computing the quotient under the assumption of exact divisibility.
Let $f, g \in \R[x]$ be such that $g$ divides $f$, $D = \deg(f)$,  $T = \max(|f|_0, |g|_0, |f/g|_0)$, and  $H = \max(|f|_\infty, |g|_\infty, |f/g|_\infty)$. The polynomial $f/g$ can be computed with $\widetilde{O}(T\log(DH))$ bit operations if $R = \mathbb{Z}$;
$\widetilde{O}(T \log D \log q)$ bit operations if $R = \mathbb{F}_q$ has characteristic $> D - \deg(g)$;
and $\widetilde{O}(T \log^2(D)\log(Dq))$ bit operations if $R = \mathbb{F}_q$ has characteristic $\leq D - \deg(g)$~\cite{Giorgi2022-ISSAC,Giorgi2021-ISSAC}.
An algorithm for computing $f/g$ with bit complexity $\widetilde{O}(T(\log H + T \log D))$ is also provided in~\cite{Nahshon-Sigma2026}.

On the computational complexity of the sparse divisibility test, deciding whether $x^N - 1$ does not divide the product of a finite set of sparse polynomials over integers is 
NP-complete~\cite{plaisted1984,plaisted1978some,plaisted1977sparse}.
For two sparse polynomials $f(x)$ and $g(x)$ with integer coefficients, the following problems are NP-hard: determining if the quotient when $f(x)$ is divided by $g(x)$ has a non-zero constant term and determining the degree of the remainder when $f(x)$ is divided by $g(x)$~\cite{plaisted1984}.
Assuming the Extended Riemann Hypothesis (ERH), the divisibility problem over integers belongs to coNP~\cite{Grigoriev-ISSAC1992}.
Note that the result in~\cite{Grigoriev-ISSAC1992}  only establishes an upper bound (membership in CoNP) and does not claim that the nondivisibility problem is hard.
Deciding whether a sparse polynomial is divisible by the square of a degree one polynomial in $\F_p[x]$ is NP-hard \cite[Theorem 1.5]{BI-SIAMJC2016}.
This result demonstrates that there is a linear polynomial $l$ for which deciding whether $l^2 \mid f$ holds is hard; hence, it does not imply hardness for the general divisibility problem.


The rest of the paper is organized as follows.
In Section \ref{sec-F}, we show that the divisibility test is CoNP-hard over finite fields.
In Section \ref{sec-red}, we present an alternative proof of Theorem \ref{thm-main} by constructing a polynomial-time reduction.
In Section \ref{sec-conc}, conclusions are presented.
In the appendix, we present a randomized polynomial-time reduction from the non-divisibility problem over the integers to the non-divisibility problem over finite fields.
We present this fact along with its proof because the reduction may be interesting in its own right and could be useful in other settings.

\section{Non-divisibility Test over Finite Fields is NP-hard}
\label{sec-F}
Sparse polynomials are represented by coefficient-exponent lists, with
exponents written in binary.  On the nonzero-divisor domain over prime fields,
let
\[
  \DIV=\{(p,f,g): p\text{ prime},\ f,g\in \F_p[x],\ g\ne0,\ g\mid f\},
\]
and let \(\NONDIV\) be its complement on the same well-formed domain.  The
convention for zero divisors is irrelevant below since every constructed
target instance has \(g\ne0\).

\subsection{The Divisibility Bridge}

\begin{lemma}
\label{lem:bridge}
Let \(p\) be prime and let \(P\in\F_p[x]\) be sparse.  If \(P\ne0\), define
\[
  f_P=P(x^p)-P(x),
  \qquad
  g_P=(x^p-x)P(x).
\]
If \(P=0\), define instead the guarded pair
\[
  f_P=1,\qquad g_P=x.
\]
Then \(g_P\ne0\) and
\[
  g_P\mid f_P
  \quad\Longleftrightarrow\quad
  P\text{ has no root in }\F_p.
\]
Equivalently,
\[
  g_P\nmid f_P
  \quad\Longleftrightarrow\quad
  P\text{ has a root in }\F_p.
\]
Moreover, if \(P\ne0\) has \(t\) nonzero terms and degree \(E\), then each of
\(f_P\) and \(g_P\) has at most \(2t\) terms after collection,
\[
  \deg f_P\le pE,\qquad \deg g_P\le p+E,
\]
with the convention that \(\deg 0=-\infty\).  In the guarded case \(P=0\),
both outputs have one term and degrees \(0\) and \(1\), respectively. 
Thus, the map \(P\mapsto(f_P,g_P)\) has polynomial sparse bit-size blowup.
\end{lemma}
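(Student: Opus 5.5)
The plan is to separate the guarded case, then use the Frobenius identity and a square-free divisibility criterion to reduce $g_P\mid f_P$ to a statement about roots of $P$ in $\F_p$.

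\textbf{Guarded case $P=0$.} Here $g_P=x\neq 0$. Since $x\nmid 1$, the left side of the equivalence is false. The zero polynomial vanishes at every element of $\F_p$, so the right side ("$P$ has no root") is also false. The two sides agree, and the size claims for this case are immediate.

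\textbf{Main case $P\neq 0$.} The first step is the Frobenius identity $P(x^p)=P(x)^p$ in $\F_p[x]$. Write $P=\sum c_i x^{e_i}$. Freshman's dream in characteristic $p$ gives $P(x)^p=\sum c_i^p x^{pe_i}$, and Fermat gives $c_i^p=c_i$.

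Hence $f_P=P^p-P=P\,(P^{p-1}-1)$. Also $g_P=(x^p-x)P\neq 0$, since $\F_p[x]$ is a domain and both factors are nonzero. Cancelling the nonzero factor $P$ in this domain,
\[
g_P\mid f_P \iff (x^p-x)\mid P^{p-1}-1 .
\]

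The second step is the criterion that $(x^p-x)\mid R$ if and only if $R(a)=0$ for all $a\in\F_p$. This holds because $x^p-x=\prod_{a\in\F_p}(x-a)$ is a product of distinct linear factors, and a polynomial is divisible by $x-a$ exactly when it vanishes at $a$. Distinct monic linear factors are pairwise coprime, so divisibility by each one gives divisibility by their product.

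Apply this with $R=P^{p-1}-1$. The divisibility holds iff $P(a)^{p-1}=1$ for every $a\in\F_p$. For $P(a)\neq 0$ this equation is true by Fermat, and for $P(a)=0$ it reads $0=1$, which is false. So the divisibility holds iff $P$ has no root in $\F_p$. The second form of the equivalence is just the contrapositive.

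\textbf{Size bounds.}
\begin{itemize}
\item $P(x^p)$ has exactly $t$ terms, with exponents $pe_i$. Subtracting $P$ gives at most $2t$ terms, and $\deg f_P\le pE$. Cancellation may lower the degree, possibly down to $f_P=0$, which the $\deg 0=-\infty$ convention covers.
\item $g_P=x^pP-xP$ is a sum of two $t$-term polynomials, so it has at most $2t$ terms, and $\deg g_P=p+E$.
\item Exponents grow by at most a factor $p$, i.e.\ by $O(\log p)$ bits each. Coefficients stay in $\F_p$. So the bit-size blowup is polynomial.
\end{itemize}

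\textbf{Main obstacle.} The only real subtlety is the cancellation of $P$ and the correct handling of $P=0$, which is why the guarded case exists. Without the guard, $g_P=0$ and the ill-defined divisibility convention would give the wrong answer. Everything else is routine.
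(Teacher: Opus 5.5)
Your proof is correct and follows the paper's argument step for step: the guarded case, the Frobenius identity $P(x^p)=P(x)^p$, cancelling the nonzero factor $P$ to reduce to $(x^p-x)\mid P^{p-1}-1$, and then the vanishing-on-$\F_p$ criterion together with the size count. The only cosmetic difference is how you justify that criterion: you use the factorization $x^p-x=\prod_{a\in\F_p}(x-a)$ into pairwise coprime linear factors, whereas the paper divides $R$ by the monic $x^p-x$ and notes that a nonzero remainder of degree $<p$ cannot vanish at all $p$ points of $\F_p$.
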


\begin{proof}
The guarded case is immediate: the zero polynomial has every element of
\(\F_p\) as a root, while \(x\nmid 1\), and the divisor \(x\) is nonzero.

Assume now that \(P\ne0\).  Put \(A=x^p-x\).  Since \(\F_p[x]\) is an
integral domain, \(g_P=AP\ne0\).  In characteristic \(p\), Frobenius gives
\[
  P(x^p)=P(x)^p,
\]
because each coefficient \(c\in\F_p\) satisfies \(c^p=c\).  Hence
\[
  f_P=P(x)^p-P(x)=P(x)\bigl(P(x)^{p-1}-1\bigr).
\]
Therefore
\[
  g_P\mid f_P
  \quad\Longleftrightarrow\quad
  AP\mid P(P^{p-1}-1).
\]
Since \(P\ne0\) and \(\F_p[x]\) is a domain, cancellation of the common factor
\(P\) is valid, so this is equivalent to
\begin{equation}\label{eq:p4e1}
    A\mid P^{p-1}-1.
\end{equation}

We next recall the needed divisibility criterion for \(A=x^p-x\).  For any
\(R\in\F_p[x]\),
\begin{equation}\label{eq:p4e2}
    A\mid R
    \quad\Longleftrightarrow\quad
    R(a)=0\text{ for all }a\in\F_p.
\end{equation}

The forward implication follows by evaluation.  Conversely, divide \(R\) by
the monic polynomial \(A\), say \(R=AQ+B\) with \(\deg B<p\).  If \(R\)
vanishes on all of \(\F_p\), then so does \(B\), since \(A(a)=0\) for every
\(a\in\F_p\).  A nonzero polynomial of degree \(<p\) over a field cannot have
all \(p\) elements of \(\F_p\) as roots, so \(B=0\), which proves \eqref{eq:p4e2}.

Apply (2) to \(R=P^{p-1}-1\).  For \(a\in\F_p\), if \(P(a)\ne0\), then
\(P(a)^{p-1}=1\); if \(P(a)=0\), then \(P(a)^{p-1}-1=-1\ne0\), including in
the case \(p=2\).  Thus \(P^{p-1}-1\) vanishes at every point of \(\F_p\)
exactly when \(P\) has no root in \(\F_p\).  Combining this with \eqref{eq:p4e1} proves
the asserted divisibility equivalence.

It remains only to record the sparse-size bounds.  If
\[
  P=\sum_{i=1}^t c_i x^{e_i},
\]
then
\[
  f_P=\sum_{i=1}^t c_i x^{pe_i}-\sum_{i=1}^t c_i x^{e_i},
  \qquad
  g_P=\sum_{i=1}^t c_i x^{p+e_i}-\sum_{i=1}^t c_i x^{e_i+1}.
\]
Thus each output has at most \(2t\) displayed terms before collecting like
terms, and collection can only decrease support.  The displayed exponents
also give \(\deg f_P\le pE\) and \(\deg g_P\le p+E\).  Multiplication and
addition of binary exponents by \(p\), together with arithmetic in
\(\F_p\), require only polynomially many bit operations in the sparse input
size.
\end{proof}

\subsection{Hardness Consequence}

We use the following finite-field sparse root-detection theorem of
Bi--Cheng--Rojas.

\begin{theorem}[Bi--Cheng--Rojas {\cite[Theorem 1.4]{BI-SIAMJC2016}}]
\label{thm:bcr}
The sparse root-detection problem
\[
  \SparseRoot
  =
  \{(p,P):p\text{ prime},\ P\in\F_p[x]\text{ sparse, and }
    P(a)=0\text{ for some }a\in\F_p\}
\]
is NP-hard under BPP many-one reductions.  More explicitly, for every
language \(L\in\NP\), there is a randomized polynomial-time map which, on
input \(y\), outputs a prime \(p_y\) and a sparse polynomial
\(P_y\in\F_{p_y}[x]\), of polynomial sparse encoding size, such that with
bounded error
\[
  y\in L
  \quad\Longleftrightarrow\quad
  P_y\text{ has a root in }\F_{p_y}.
\]
\end{theorem}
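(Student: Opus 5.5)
Since Theorem~\ref{thm:bcr} is quoted from \cite{BI-SIAMJC2016}, the honest plan is to reconstruct the Bi--Cheng--Rojas argument in outline. It has three parts: a randomized choice of prime, which is the only place randomness enters and why the reduction is BPP rather than deterministic; a Plaisted-style encoding of Boolean assignments as elements of $\F_p^*$; and a step that turns one NP-hard combinatorial condition into the vanishing of a single sparse polynomial. I would not reduce from 3SAT directly. The conjunction over clauses is an intersection of zero sets, and a single sparse polynomial cannot easily express that over $\F_p$. Instead I would start from an NP-hard problem whose yes-certificate is an \emph{existential} choice that a single sparse equation can test, for instance a subset-sum or ``sum of residues'' type problem in Plaisted's cyclotomic formulation.

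\textbf{Step 1 (prime selection).} Given an instance $y$ of size $n$, fix the first $m=\mathrm{poly}(n)$ primes $q_1,\dots,q_m$ and put $M=\prod q_j$. Sample $k$ at random from a polynomially bounded range and test $p=1+kM$ for primality. Linnik/Dirichlet-type density bounds for primes in progressions, on average over moduli, give a success probability of at least $1/\mathrm{poly}(n)$, and repetition makes failure negligible. Then $\F_p^*$ is cyclic of order divisible by $M$. Fix a generator $\gamma$ of the order-$M$ subgroup, which can be found by random sampling plus checks of $a^{M/q_j}\ne1$. The subgroup $\mu_M\cong\prod_j\Z/q_j$ by CRT, so an element $a\in\mu_M$ encodes a residue vector $(r_1,\dots,r_m)$, which I read as a combinatorial assignment.

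\textbf{Step 2 (sparse encoding).} I would design sparse polynomials $h$ of polynomial size with few terms and exponents $M/q_j$ times small multipliers, so that for $a\in\mu_M$ the vanishing of $h(a)$ reflects a condition on the residues $r_j$. Then set $P_y(x)=h(x^{(p-1)/M})$, which keeps the exponents polynomial in bit size and restricts attention to $\mu_M$ automatically. Correctness then says: $y\in L$ if and only if some residue vector makes $h$ vanish. \textbf{Step 3} checks the sparse-size bound and the bounded error.

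\textbf{Main obstacle.} The hard step is Step 2: a sparse $h$ generally does not see residues ``coordinatewise,'' and sums of $\ll p$ roots of unity can vanish accidentally. I would first try to use the randomness of $p$ and $\gamma$ to argue that unintended vanishing sums occur with negligible probability. This would go by bounding the number of primes $p\equiv1 \pmod M$ dividing the norm of a fixed nonzero cyclotomic integer, and taking a union bound over the polynomially many relevant combinations. If that union bound fails because there are exponentially many assignments, I would fall back on the exact construction in \cite{BI-SIAMJC2016}.
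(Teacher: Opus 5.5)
Your outline does not prove the statement. Step~2 is the whole content of the theorem, and it is never carried out: you fix no NP-hard source problem, write down no $h$, and end by falling back on the construction of \cite{BI-SIAMJC2016}, i.e.\ on the theorem itself. The paper also imports Theorem~\ref{thm:bcr} without proof, which is legitimate there, but then your Steps~1--3 verify nothing beyond the citation.

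Your reason for avoiding 3SAT is exactly where the missing idea lies. The conjunction over clauses should not be expressed over $\F_p$ at all. It should be expressed over $\mathbb C$, where positivity is available, and only afterwards transferred to $\F_p$. Concretely, this is Plaisted's root-of-unity encoding, which is the usual starting point for such reductions and, as far as I recall, the one \cite{BI-SIAMJC2016} builds on:
\begin{itemize}
\item Take distinct small primes $q_1,\dots,q_n$ and put $M=\prod_i q_i$. Read $\zeta\in\mu_M(\mathbb C)$ as the assignment ``$X_i$ is true iff $\zeta^{M/q_i}=1$''; by the CRT every assignment occurs.
\item On $\mu_M$, the binomial $x^{M/q_i}-1$ vanishes iff $X_i$ is true, and $\sum_{j<q_i}x^{jM/q_i}=(x^M-1)/(x^{M/q_i}-1)$ vanishes iff $X_i$ is false.
\item Let $Q_C$ be the product of the three literal polynomials of clause $C$; it has at most $(\max_i q_i)^3$ terms.
\item Set $f_B=\sum_C x^{D}Q_C(x)Q_C(x^{-1})$ with $D=\max_C\deg Q_C$. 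Since the coefficients are real, $f_B(x)=x^D\sum_C|Q_C(x)|^2$ on $|x|=1$. Hence $B$ is satisfiable iff $f_B$ vanishes at some $M$-th root of unity.
\end{itemize}
Add a suitable integer multiple of $x^M-1$ to make the constant term $1$, so that $0$ is never a root. Then $P_y=f_B(x^{(p-1)/M})\bmod p$ is the polynomial your Step~2 was looking for.

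The transfer to $\F_p$ is controlled by a single integer, not by a union bound over assignments, so the exponentially many assignments are harmless.
\begin{itemize}
\item If $f_B(\zeta)=0$ with $\zeta$ of order $d\mid M$, then $\Phi_d\mid f_B$ in $\Z[x]$. For every prime $p\equiv1\pmod M$, $\Phi_d$ splits over $\F_p$, so $f_B\bmod p$ has a root in $\mu_M(\F_p)$. This direction is error-free.
\item If $f_B$ has no root in $\mu_M(\mathbb C)$, let $\|f_B\|_1$ be the sum of the absolute values of its coefficients. Then $R=\prod_{\zeta^M=1}f_B(\zeta)$, which is $\pm\operatorname{Res}(x^M-1,f_B)$, is a nonzero integer with $\log_2|R|\le M\log_2\|f_B\|_1\le 2^{\operatorname{poly}(n)}$. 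Every prime $p$ for which $f_B\bmod p$ acquires a root in $\mu_M(\F_p)$ divides $R$. This is the same mechanism as Lemma~\ref{lem-red1}.
\end{itemize}
So $p$ must be drawn from a pool of primes $\equiv1\pmod M$ far larger than $M\log_2\|f_B\|_1$. For example, take primes up to about $M^3$, with $k$ ranging up to about $M^2$: polynomially many bits, not a polynomially bounded range. Your ``polynomially bounded range'' for $k$ is too small for this, and too small for any Linnik-type existence statement.

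This exposes the remaining flaw in Step~1. For one fixed $M$, the product of the first $m$ primes, a density bound ``on average over moduli'' says nothing. Moreover, the class $1\bmod M$ is exactly the one a Siegel zero would deplete. To sample such $p$ unconditionally in expected polynomial time, you need some freedom in choosing the $q_i$, combined with a theorem that allows only a bounded set of exceptional moduli (e.g.\ Alford--Granville--Pomerance), so that $M$ can be made to avoid them.
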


We now prove the main result of this paper.
\begin{theorem}
\label{thm-F}
\(\NONDIV\) is NP-hard under BPP many-one reductions.  Equivalently, \(\DIV\)
is hard under one-query randomized Turing reductions, whose single oracle answer
is negated.  Consequently, if sparse exact divisibility over all finite fields
had a randomized bounded-error algorithm using
\[
  \operatorname{poly}(T,\log D,\log q)
\]
bit operations on inputs \(f,g\in\F_q[x]\) with \(\deg f,\deg g<D\) and at
most \(T\) nonzero terms each, then \(\NP\subseteq\BPP\).
\end{theorem}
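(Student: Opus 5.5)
The plan is to compose the randomized reduction of Theorem~\ref{thm:bcr} with the deterministic map of Lemma~\ref{lem:bridge}. Fix \(L\in\NP\). On input \(y\), first run the Bi--Cheng--Rojas map to obtain \((p_y,P_y)\) with \(p_y\) prime and \(P_y\in\F_{p_y}[x]\) sparse, of encoding size polynomial in \(|y|\). With bounded error, \(y\in L\) holds exactly when \(P_y\) has a root in \(\F_{p_y}\). Then apply Lemma~\ref{lem:bridge} with \(p=p_y\) and \(P=P_y\), using the guarded pair if \(P_y=0\), and output \((p_y,f_{P_y},g_{P_y})\). The lemma gives \(g_{P_y}\ne0\), so the instance lies in the well-formed domain. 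It also gives \(g_{P_y}\nmid f_{P_y}\) exactly when \(P_y\) has a root in \(\F_{p_y}\). Hence, with the same bounded error, \(y\in L\iff (p_y,f_{P_y},g_{P_y})\in\NONDIV\). This is a BPP many-one reduction from \(L\) to \(\NONDIV\).

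The step I expect to need the most care is the size bookkeeping. By Lemma~\ref{lem:bridge}, each output polynomial has at most \(2t\) terms, and its degree is at most \(p_yE\) or \(p_y+E\). The exponents therefore have bit length at most \(\log p_y+\log E+O(1)\), which is polynomial because \(p_y\) and \(E\) are written in binary in the BCR output. The coefficients are elements of \(\F_{p_y}\). Computing the new exponents takes one multiplication or addition of binary integers per term, so the whole composition runs in polynomial time. The error probability is inherited unchanged, since the second stage is deterministic and exactly correct.

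For the equivalent formulation, note that \(\DIV\) and \(\NONDIV\) are complements on the same domain, and every produced instance lies in that domain. So a \(\DIV\) oracle decides \(L\) with one randomized query whose answer is negated. For the consequence, suppose some randomized bounded-error algorithm decides divisibility over all \(\F_q\) in \(\operatorname{poly}(T,\log D,\log q)\) bit operations. Restrict it to \(q=p_y\) prime. The instance has \(T\le 2t\) terms, degrees below \(D=p_yE+p_y+2\) (which covers both \(\deg f\le pE\) and \(\deg g\le p+E\)), and \(\log q=\log p_y\); all of these are polynomial in the input size. Composing the reduction with this algorithm and negating its answer gives a BPP algorithm for \(L\). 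Because there are two independent error sources, one from the reduction and one from the algorithm, I would first amplify each by majority vote until its error is at most \(1/6\). The union bound then gives total error at most \(1/3\). Since \(L\in\NP\) was arbitrary, \(\NP\subseteq\BPP\).
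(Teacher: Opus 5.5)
Your proposal is correct and is essentially the paper's proof: compose the Bi--Cheng--Rojas reduction with the bridge of Lemma~\ref{lem:bridge}, check $g_{P_y}\ne0$ and the polynomial size and degree bounds, negate a $\DIV$ oracle for the one-query formulation, and compose with the hypothetical algorithm plus amplification to get $\NP\subseteq\BPP$. One small imprecision: a many-one reduction outputs an instance rather than a bit, so it cannot itself be amplified by majority vote. Instead, amplify only the divisibility algorithm, so that the composed error is at most, say, $1/3+1/12<1/2$, and then repeat the whole composed procedure independently and take a majority.
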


\begin{proof}
Let \(L\in\NP\).  By Theorem~\ref{thm:bcr}, there is a BPP many-one reduction
that maps \(y\) to a prime \(p_y\) and a sparse polynomial
\(P_y\in\F_{p_y}[x]\) such that, with bounded error,
\[
  y\in L
  \quad\Longleftrightarrow\quad
  P_y\text{ has a root in }\F_{p_y}.
\]
Apply Lemma~\ref{lem:bridge} to form the sparse pair
\((f_{P_y},g_{P_y})\).  The lemma gives
\[
  P_y\text{ has a root in }\F_{p_y}
  \quad\Longleftrightarrow\quad
  g_{P_y}\nmid f_{P_y}.
\]
Therefore, the randomized map
\[
  y\longmapsto (p_y,f_{P_y},g_{P_y})
\]
is a BPP many-one reduction from \(L\) to \(\NONDIV\).  The output size is
polynomial because the source reduction has polynomial sparse output size and
Lemma~\ref{lem:bridge} increases support by at most a constant factor.  More
precisely, if \(P_y\ne0\) and \(\deg P_y<D\), and if \(D=1\) is used for the
guarded case \(P_y=0\), then the target degree bound may be taken to be
\[
  D' = 1+\max\{p_yD,\ p_y+D,\ 1\},
\]
so \(\log D'=O(\log p_y+\log(D+1))\). This proves the BPP many-one NP-hardness
of \(\NONDIV\).

An oracle for \(\DIV\) decides \(\NONDIV\) on the same well-formed domain by
negating its answer. Hence, the same construction gives a one-query randomized
Turing reduction to \(\DIV\) with answer negation.

Finally, suppose that there is a randomized bounded-error polynomial-time
algorithm for \(\DIV\) over all finite fields, with running time polynomial in
\(T,\log D,\log q\). It applies, in particular, to the prime-field instances
constructed above.  Negating its answer gives a randomized bounded-error
algorithm for \(\NONDIV\) on those instances, and composing it with the BPP
many-one reduction just proved decides every \(L\in\NP\) in BPP.  Standard
amplification keeps the total error bounded. Therefore, \(\NP\subseteq\BPP\).
\end{proof}

The theorem deliberately does not assert the ordinary yes-to-yes Karp
NP-hardness of \(\DIV\).  The bridge has the opposite polarity:
\(\DIV\)-yes instances correspond to root nonexistence, while
\(\NONDIV\)-yes instances correspond to root existence.

\section{Conclusion}
\label{sec-conc}
We consider the problem of divisibility for sparse polynomials with coefficients in a finite field, namely determining whether one sparse polynomial is exactly divisible by another.
It is shown that this problem is CoNP-hard; that is, deciding whether a sparse polynomial does not divide another sparse polynomial exactly is NP-hard under standard complexity assumptions.

\section*{Acknowledgment}
This paper is supported by the Strategic Priority Research Program of CAS Grants XDA0480502 and XDA0480503, NSFC Grants 12288201 and 92270001.

\bibliographystyle{KLMM/klmm}   
\bibliography{Refs} 

\appendix
\section{A Randomized Polynomial Reduction from Integer Non-Divisibility to Finite Field Non-Divisibility}
\label{sec-red}

In this section, we present an alternative proof of Theorem \ref{thm-main} by constructing a randomized polynomial-time reduction that transforms integer non-divisibility into non-divisibility over a finite field.
Define the decision problem

\begin{definition}[Non-Divisibility-$\R$]
Decide whether a sparse polynomial $g\in \R[x]$ does not divide another sparse polynomial $f\in \R[x]$ for a ring $\R$.
\end{definition}
The reduction is a randomized polynomial-time Turing reduction to the finite-field non-divisibility oracle.  In branches where the integer-side answer has already been determined, the reduction may return directly; equivalently, those branches may be encoded by fixed sentinel oracle instances over \(\mathbb F_2\),
for instance \((0,1)\) for a YES instance and \((1,0)\) for a NO instance.

\begin{theorem}
\label{thm-Red-I2F}
Under this randomized Turing/oracle reduction reading, Non-Divisibility-$\Z$ reduces in randomized polynomial time to Non-Divisibility-$\F_q$.
\end{theorem}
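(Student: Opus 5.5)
Given sparse $f,g\in\Z[x]$, the plan is to decide $g\nmid f$ over $\Z$ by reducing modulo a random prime and asking a single $\F_p$ non-divisibility question, after a preprocessing step that makes reduction mod $p$ faithful. I expect the Gauss content and leading-coefficient bookkeeping to be routine. The main obstacle is controlling the remainder modulo $p$ when the integer remainder is nonzero but possibly of exponential size.

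\textbf{Preprocessing.} If $g=0$, I return the sentinel instance matching the chosen convention. If $f=0$, I return the NO sentinel $(1,0)$ over $\F_2$. Otherwise I compute $\cont(f)$ and $\cont(g)$ by integer gcds in polynomial time. By Gauss's lemma, $g\mid f$ in $\Z[x]$ if and only if $\cont(g)\mid\cont(f)$ and $\mathrm{pp}(g)\mid\mathrm{pp}(f)$ in $\Z[x]$. If the content condition fails, I return the YES sentinel $(0,1)$. So we may assume $g$ is primitive, and then $g\mid f$ in $\Z[x]$ if and only if $g\mid f$ in $\mathbb{Q}[x]$. I also strip common powers of $x$, which are read off the lowest exponents.

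\textbf{Key step.} Take a prime $p\nmid \lc(g)$. Then $g\mid f$ over $\mathbb{Q}$ implies $\bar g\mid\bar f$ in $\F_p[x]$. For the converse, write $\lc(g)^{k}f=Qg+R$ with $k=\deg f-\deg g+1$ and $R\in\Z[x]$, $\deg R<\deg g$. If $R\neq0$, it suffices that $p$ avoids $\lc(g)$ and some nonzero coefficient of $R$, because then $\bar R\ne0$ and $\bar g\nmid\bar f$.

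\textbf{Main obstacle.} The coefficients of $R$ can have $\exp(\mathrm{poly})$ bit length, since the degree is exponential, so one cannot simply pick a prime above a height bound. I would try two things.

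First, bound the height. Using the $\mathrm{resultant}$/subresultant description of $R$ together with Hadamard's bound, I would show every coefficient of $R$ has absolute value at most $2^{B}$, where $B=\mathrm{poly}(D)\log H$. Then any such coefficient has at most $B$ prime divisors.

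Second, sample primes at random from an interval $[N,2N]$, choosing $\log N$ to be polynomial in the input size and large enough that the number of primes there, about $N/\log N$, exceeds $2B$ by a wide margin. This is possible because $B$ is only exponential, so $\log N=O(\log B)$ is polynomial. A random prime then divides the chosen coefficient with probability at most $B\log N/N<1/3$. Primes are generated by Miller--Rabin in $\BPP$. If necessary I can repeat the query with independent primes and output YES if any query says non-divisible; this gives one-sided error, since divisibility over $\Z$ is always preserved.

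\textbf{Output.} The reduction outputs $(p,\bar f,\bar g)$. Reducing coefficients mod $p$ is polynomial time, exponents are unchanged, and $\bar g\neq0$ because $p\nmid\lc(g)$. Combined with Theorem~\ref{thm-F}, this also yields an alternative route to Theorem~\ref{thm-main}.
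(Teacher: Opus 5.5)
Your proposal is correct and is essentially the paper's proof. You reduce to a primitive divisor via contents and Gauss's lemma. You use the pseudo-remainder identity $\lc(g)^k f=Qg+R$, with $R\neq0$ exactly when $g\nmid f$ over $\mathbb{Q}$, to show that every bad prime divides $\lc(g)$ or a fixed nonzero coefficient of $R$. You bound that coefficient's bit length by $\mathrm{poly}(D)\log(2H)$, which is only exponential in the input size. Finally you sample a prime with polynomially many bits, so the error is one-sided and amplifiable.

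The differences from the paper are cosmetic: it bounds heights by the direct recurrence $M_{j+1}\le 2HM_j$ rather than subresultants and Hadamard, and it puts $\lc(g)$ into the bad set via $A=a\,\cont(P)$. Two small fixes are needed. Your sampling step must actually guarantee $p\nmid\lc(g)$, for example by taking $N>H$. Your height exponent should read $\mathrm{poly}(D)\log(2H)$ rather than $\mathrm{poly}(D)\log H$, since pseudo-remainder coefficients grow even when $H=1$.
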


Since Non-Divisibility-$\Z$ is proved to be NP-complete in \cite{plaisted1984}, Theorem \ref{thm-main} derives from Theorem \ref{thm-Red-I2F}.

Theorem \ref{thm-Red-I2F} is proved in three steps.
In Step 1 presented in Section \ref{sec-red1}, some preliminary cases are considered.
In Step 2 presented in Section \ref{sec-red2}, the main reduction step is given.
In Step 3 presented in Section \ref{sec-red3}, the final proof of Theorem \ref{thm-Red-I2F} is presented.

\subsection{Preliminary Reductions}
\label{sec-red1}
Let \((g,f)\in\mathbb Z[x]^2\) be a sparse input pair, and let \(S\) be its
sparse binary input length.  The first step combines equal exponents and
deletes zero coefficients.  This normalization is polynomial time in \(S\), and
we assume it has been performed.

The following boundary cases are decided without using the main randomized
step.  If \(g=0\), then \(0\mid f\) if and only if \(f=0\); hence the
non-divisibility answer is NO when \(f=0\) and YES otherwise.  If
\(g\neq 0\) and \(f=0\), then \(f=g\cdot 0\), so the answer is NO.  If
\(g=c\) is a nonzero constant, then \(c\mid f\) in \(\mathbb Z[x]\) if and
only if \(c\) divides every coefficient of \(f\), with missing sparse
coefficients treated as zero; this is checked by integer divisibility tests.
Finally, after excluding \(f=0\), if \(0<\deg g>\deg f\), then \(g\) cannot
divide \(f\), because \(\deg(gh)=\deg g+\deg h\) for nonzero
\(h\in\mathbb Z[x]\).  This branch is therefore a YES instance.

It remains to consider the case
\[
        f,g\neq 0,\qquad m=\deg g>0,\qquad m\leq \deg f .
\]
Let
\[
        c=\cont(g)>0,\qquad G=g/c .
\]
Here \(c\) is the gcd of the absolute values of the listed nonzero
coefficients of \(g\), and \(G\in\mathbb Z[x]\) is primitive with
\(\deg G=m\).  Both \(c\) and \(G\) are computable in polynomial time from the
sparse input.  The literal divisibility problem in \(\mathbb Z[x]\) splits as
\[
        cG\mid f
        \quad\Longleftrightarrow\quad
        \bigl(c\text{ divides every coefficient of }f\bigr)
        \text{ and } G\mid f/c \text{ in }\mathbb Z[x].
\]
Indeed, if \(f=cGH\), then all coefficients of \(f\) are multiples of \(c\)
and \(f/c=GH\).  Conversely, if all coefficients of \(f\) are divisible by
\(c\) and \(f/c=GH\), then \(f=cGH\).  Hence, if some coefficient of \(f\) is
not divisible by \(c\), the reduction returns YES.  Otherwise set
\[
        F=f/c\in\mathbb Z[x].
\]
Since \(G\) is primitive, Gauss's lemma in primitive divisor form gives
\[
        G\mid F \text{ in }\mathbb Z[x]
        \quad\Longleftrightarrow\quad
        G\mid F \text{ in }\mathbb Q[x].
\]

\subsection{The Pseudo-Remainder Mechanism}
\label{sec-red2}
The randomized part of the reduction chooses a large prime and asks the finite-field oracle about the reduction of the primitive pair \((G,F)\).  The
key point is that, when \(G\nmid F\), all primes that incorrectly make
\(G\bmod p\) divide \(F\bmod p\) are forced to divide one explicitly bounded
integer arising from pseudo-division.

\begin{lemma}
\label{lem-red1}
Let \(G,F\in\mathbb Z[x]\), let \(G\) be primitive, let
\(m=\deg G>0\), let \(n=\deg F\), and set \(a=\lc(G)\).  Suppose
\(G\nmid F\) in \(\mathbb Q[x]\).  Then there is a nonzero polynomial
\(P\in\mathbb Z[x]\), with \(\deg P<m\), such that every prime \(p\) for which
\((G\bmod p)\mid(F\bmod p)\) in \(\mathbb F_p[x]\) divides
\[
        A=a\,\cont(P).
\]
Moreover, if \(D=\max(m,n,1)\) and the absolute values of all coefficients of
\(G\) and \(F\) are at most \(H=2^B\), then the bit length of \(A\) is bounded
by \(C_0D(S+1)\) for an absolute constant \(C_0\), where \(S\) is the sparse
input length.
\end{lemma}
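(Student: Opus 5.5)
We will take \(P\) to be the pseudo-remainder of \(F\) by \(G\). Pseudo-division in \(\mathbb Z[x]\) gives
\[
  a^{k}F = QG + P,\qquad k=\max(n-m+1,0),\qquad Q,P\in\mathbb Z[x],\qquad \deg P<m .
\]
Over \(\mathbb Q[x]\) this becomes \(F=(Q/a^k)G+P/a^k\). The division remainder over \(\mathbb Q\) is unique and \(G\nmid F\), so \(P\neq 0\).

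Next, fix a prime \(p\) with \((G\bmod p)\mid(F\bmod p)\), and suppose \(p\nmid a\). We will show \(p\mid\cont(P)\). Since \(p\nmid a\), the reduction \(\bar G\) has degree exactly \(m\), and \(\bar a\) is invertible in \(\F_p\). Reducing the identity modulo \(p\) gives \(\bar a^{k}\bar F=\bar Q\bar G+\bar P\) with \(\deg\bar P<m=\deg\bar G\). Since \(\bar G\mid\bar F\), we also have \(\bar G\mid\bar a^k\bar F\), so \(\bar P\) is the remainder of a multiple of \(\bar G\). By uniqueness of Euclidean division in \(\F_p[x]\), this forces \(\bar P=0\), i.e.\ \(p\) divides every coefficient of \(P\), hence \(p\mid\cont(P)\). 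If instead \(p\mid a\), then trivially \(p\mid A\). In both cases \(p\mid A=a\cont(P)\). Note that \(A\ne0\) because \(a\ne0\) and \(P\ne0\).

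The remaining and main obstacle is the size bound. \(\cont(P)\le\|P\|_\infty\), so it suffices to bound the coefficients of \(P\) by \(2^{O(D(S+1))}\). Here pseudo-division can be exponentially long in \(S\), since \(n\) is only bounded by \(2^S\); but the statement measures bit length in terms of \(D\), so a bound of the shape \(D\cdot\mathrm{poly}\) suffices. We will use the determinantal (subresultant-style) description: the coefficients of \(P\) solve the linear system \(a^kF=QG+P\). Equivalently, each pseudo-division step multiplies the current remainder by \(a\) and subtracts a shifted multiple of \(G\). Consequently the coefficient bound grows by at most a factor \(2H\cdot(m+1)\) per step, and there are at most \(k\le D\) steps. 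This gives
\[
  \|P\|_\infty\le H\,(2H(m+1))^{D},
\]
so
\[
  \log_2|A|\le B+B+D(B+1+\log_2(D+1)).
\]
Since \(B\le S\) and \(\log_2(D+1)\le S+1\) (the exponents are written in binary in the input), this is at most \(C_0D(S+1)\). The care needed is to check that the per-step growth really is additive in the bit length, i.e.\ that coefficients of the running remainder stay bounded by \(H(2H(m+1))^{j}\) after \(j\) steps; this follows by induction on \(j\). A careless bound that multiplies by the number of terms in a shift would still be fine, because the number of terms is at most \(D+1\).
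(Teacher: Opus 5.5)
Your proposal is correct and follows essentially the same route as the paper. Both arguments use the pseudo-division identity $a^kF=QG+P$, get $P\neq 0$ from uniqueness of the remainder over $\mathbb Q$, reduce modulo a prime $p\nmid a$ to force $\bar P=0$, and bound the coefficients of $P$ by induction along the pseudo-division recurrence. The per-step growth factor is actually $2H$, since each new coefficient has the form $a r_i - b_j g_{i'}$; so your extra factor $(m+1)$ and the determinantal remark are unnecessary, though they do no harm to the $C_0D(S+1)$ bound.
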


\begin{proof}
Starting from \(R_0=F\), perform the pseudo-division recurrence over
\(\mathbb Z[x]\).  While \(d_j=\deg R_j\geq m\), with
\(b_j=\lc(R_j)\), set
\[
        R_{j+1}=aR_j-b_jx^{d_j-m}G .
\]
The leading terms cancel, so the degree strictly decreases at each step.  The
process stops after some
\[
        e\leq \max(n-m+1,0)\leq D+1
\]
steps with a polynomial \(P\in\mathbb Z[x]\) satisfying \(\deg P<m\), and
induction on the recurrence gives
\[
        a^eF=QG+P
\]
for some \(Q\in\mathbb Z[x]\).  Over \(\mathbb Q[x]\), the ordinary remainder
of \(F\) upon division by \(G\) is \(a^{-e}P\).  Since \(G\nmid F\) in
\(\mathbb Q[x]\), this remainder is nonzero, and therefore \(P\neq 0\).

Let \(p\nmid A=a\,\cont(P)\).  Then \(p\nmid a\), so \(G\bmod p\) still has
degree \(m\).  Also \(p\nmid\cont(P)\), so \(P\bmod p\) is a nonzero
polynomial of degree \(<m\).  Reducing
\[
        a^eF=QG+P
\]
modulo \(p\) gives
\[
        (a\bmod p)^e(F\bmod p)=(Q\bmod p)(G\bmod p)+(P\bmod p).
\]
If \(G\bmod p\) divided \(F\bmod p\), then it would divide
\((a\bmod p)^e(F\bmod p)\), and hence it would divide \(P\bmod p\).  This is
impossible because \(G\bmod p\) has positive degree \(m\), while
\(P\bmod p\) is nonzero of degree less than \(m\).  Thus every prime producing
the false divisibility relation divides \(A\).

For the size bound, let \(M_j\) be the maximum absolute coefficient of \(R_j\).
Since \(M_0\leq H\), \(|a|\leq H\), and every coefficient of \(G\) has absolute
value at most \(H\), the recurrence gives
\[
        M_{j+1}\leq 2HM_j .
\]
Consequently
\[
        M_e\leq (2H)^{D+1}H .
\]
Every coefficient of \(P=R_e\), and hence \(\cont(P)\), has bit length at most
\[
        1+(D+2)(B+1).
\]
Together with \(|a|\leq H\), this bounds the bit length of
\(A=a\,\cont(P)\) by \(C_0D(S+1)\) for an absolute constant \(C_0\).
\end{proof}

\subsection{The Randomized Oracle Reduction}\label{sec-red3}

We now prove Theorem \ref{thm-Red-I2F}.
\begin{proof}
We now complete the construction in the remaining primitive branch.  Let
\[
        n=\deg F,\qquad D=\max(m,n,1).
\]
Since exponents are binary encoded, \(D\leq 2^S\). 
Lemma \ref{lem-red1} gives constants
\(C_2,a_0\) such that, whenever \(G\nmid F\),
\[
        \log_2 |A|\leq 2^{C_2S^{a_0}} .
\]
A nonzero integer of bit length \(L\) has at most \(L\) distinct prime
divisors, so the number of bad primes, namely primes \(p\) for which
\((G\bmod p)\mid(F\bmod p)\) despite \(G\nmid F\), is at most
\[
        2^{C_2S^{a_0}} .
\]

Choose fixed constants \(a_1>a_0\) and \(C_1\) sufficiently large, and set
\[
        K=\lceil C_1S^{a_1}\rceil .
\]
The reduction samples uniformly from the primes in the dyadic interval
\[
        [2^{K-1},2^K).
\]
It does so by rejection sampling from that interval,
using a polynomial-time primality test.  Conditioned on acceptance, the output
prime is uniform among the primes in the interval, and the expected number of
trials is polynomial in \(K\).  By the prime number theorem, for all
sufficiently large \(K\) there are at least \(2^{K-2}/K\) primes in this
interval; increasing \(C_1\), or hardwiring the finitely many smaller input
sizes, handles the finite exceptions.  Therefore, in the nondivisible
primitive branch,
\[
        \Pr[p\text{ is bad}]
        \leq K\,2^{C_2S^{a_0}+2-K}
        \leq \frac13 .
\]
The sampled prime has \(K=\operatorname{poly}(S)\) bits.

The oracle query is the sparse finite-field instance
\[
        (G_p,F_p)=(G\bmod p,F\bmod p)\in\mathbb F_p[x]^2 .
\]
The field is represented by the \(K\)-bit prime \(p\).  Coefficient residues
have \(O(K)\) bits, exponents are inherited from the input, and terms whose
coefficients reduce to zero are deleted.  Thus the oracle query has size
polynomial in \(S\).

We verify the two correctness directions for the reduction.  First suppose
that the integer instance is a NO instance of Non-Divisibility-I, so
\(g\mid f\) in \(\mathbb Z[x]\).  The boundary and constant branches return
NO by the equivalences proved above.  In the remaining branch, the content
split gives \(F=GH\) for some \(H\in\mathbb Z[x]\).  Reducing modulo any prime
\(p\) gives
\[
        F_p=G_pH_p .
\]
Because \(G\) is primitive, \(G_p\) is not the zero polynomial for any prime
\(p\); otherwise \(p\) would divide every coefficient of \(G\).  Hence every
oracle query in this branch is a NO instance of Non-Divisibility-F.  The
reduction therefore returns NO with probability \(1\).

Now suppose that the integer instance is a YES instance, so \(g\nmid f\) in
\(\mathbb Z[x]\).  All direct YES branches have already been justified.  In
the remaining branch, \(F=f/c\) is defined and the content split gives
\(G\nmid F\) in \(\mathbb Z[x]\).  Since \(G\) is primitive, Gauss's lemma
gives \(G\nmid F\) in \(\mathbb Q[x]\).  The pseudo-remainder lemma shows that
all primes for which \(G_p\mid F_p\) divide the controlling integer
\(A=a\,\cont(P)\), and the large-prime sampler avoids these primes with
probability at least \(2/3\).  With that probability the oracle query is a YES
instance of Non-Divisibility-F, and the reduction returns YES.

All computations performed by the oracle machine are randomized polynomial
time in \(S\): normalization, degree comparison, integer gcd and divisibility
tests, coefficient division by \(c\), sampling a \(K\)-bit prime, reducing
coefficients modulo \(p\), and writing the sparse oracle query.  The
pseudo-remainder \(P\) and the controlling integer \(A\) are used only in the
analysis and are not computed by the reduction.  The reduction has one-sided
error.  Repeating the randomized trial independently and accepting YES if any
trial returns YES amplifies the success probability.  This proves the
theorem.
\end{proof}

\end{document}